\documentclass[runningheads]{llncs}

\usepackage[T1]{fontenc}
\usepackage[utf8]{inputenc}
\usepackage{graphicx}
\usepackage{color}

\usepackage{amsmath, amssymb} 
\usepackage{mathtools}

\usepackage{tikz}
\usetikzlibrary{decorations.pathreplacing}
\usetikzlibrary{positioning, arrows.meta}
\usetikzlibrary{calc}

\usepackage{hyperref}

\usepackage{cleveref} %
\newcommand{\NP}{\ensuremath{\mathsf{NP}}}

\usepackage{booktabs}
\usepackage{multirow}
\usepackage{xspace}
\usepackage{nicefrac}

\newcommand{\GED}{\ensuremath{\mathrm{GED}}\xspace}

\newcommand{\MCIS}{\ensuremath{\mathrm{MCIS}}\xspace}
\newcommand{\SI}{\ensuremath{\mathrm{SI}}\xspace}

\newcommand{\MCES}{\ensuremath{\mathrm{MCES}}\xspace}

\newcommand{\GSim}{\textsc{GSim}\xspace}

\begin{document}

\title{On the Complexity of Graph Edit Distance in Restricted Graph Classes}
\author{Maximilian Limmer\inst{1} \and
Nils~M.~Kriege\inst{1,2}\orcidID{0000-0003-2645-947X}}
\authorrunning{M. Limmer, N.M. Kriege}
\institute{Faculty of Computer Science, University of Vienna, Vienna, Austria \and
Research Network Data Science, University of Vienna, Vienna, Austria
\email{a12531403@unet.univie.ac.at}, \email{nils.kriege@univie.ac.at}}
\maketitle              %

\begin{abstract}
The graph edit distance generalizes several well-known NP-hard problems and is therefore NP-hard itself. 
However, the relationship between the considered graph class, the edit cost function, and the resulting computational complexity is not well understood.
We investigate this interplay by revisiting polynomial-time reductions from the literature, which reduce subgraph isomorphism and maximum common induced subgraph to the graph edit distance.
For these classical problems, a sharp distinction between NP-hard and polynomial-time solvable cases is known, and we make the implications for the complexity of the graph edit distance explicit.
We establish a graph-class-preserving correspondence between the maximum common edge subgraph and graph edit distance under a specific cost function, both in labeled and unlabeled graphs. In the unlabeled setting, the maximum common edge subgraph problem is polynomial-time solvable when one graph is a path and the other is a tree.
In contrast, for labeled graphs, we prove that both the maximum common edge subgraph and the graph edit distance remain NP-hard, even when both graphs are paths.

\keywords{Graph Edit Distance \and Maximum Common Subgraph \and Graph Similarity \and Computational Complexity.}
\end{abstract}

\section{Introduction}
The graph edit distance (\GED) is one of the most general measures of dissimilarity between graphs and is widely used in practice. It quantifies the dissimilarity between two graphs as the minimum cost of a sequence of edit operations turning one graph into the other. The edit operations---insertion, deletion, and relabeling of vertices and edges---are assigned application-dependent costs.
It is known that both the classical subgraph isomorphism problem (\SI) and the maximum common induced subgraph problem (\MCIS) can be reduced to computing the \GED under suitable cost functions~\cite{bunkeRelationGraphEdit1997,zengComparingStarsApproximating2009}. Since both problems are well-known to be \NP-hard~\cite{garey1979computers}, computing \GED is also \NP-hard.
The complexity landscapes of \SI and \MCIS are well understood for many restricted graph classes and parameter combinations such as treewidth and degree~\cite{garey1979computers,SYSLO198291,MATOUSEK1992343,guptaComplexitySubgraphIsomorphism1996}.

It is not clear to what extent these results transfer to \GED using the existing reductions, nor how the interaction between graph classes and specific cost functions affects the computational complexity.
In particular, the choice of edit costs can fundamentally alter the nature of the problem. 
As a simple example, consider a cost function $C_0$, see Table~\ref{tab:cost-functions}, under which edge insertions, deletions, and relabelings incur zero cost. In this case, computing the GED reduces to finding an optimal assignment between two vertex sets and can be solved in cubic time using the Hungarian algorithm. Similarly, \GED can be solved efficiently on edgeless graphs for arbitrary cost functions.
In practice, unit cost functions are widely used, and various GED-based similarity search methods are restricted to such functions, e.g., \cite{zengComparingStarsApproximating2009,LiangZ17}, with only a few exceptions~\cite{BauseBSK21}. For the edge cost function, we show an equivalence to the \MCES problem, which is fundamental for the comparison of molecular graphs~\cite{labeledLineGraph,kriege_chemical_2025}.

In this work, we investigate the complexity of computing the \GED under non-trivial cost functions in restricted graph classes. We first revisit existing reductions from \SI and \MCIS and derive their implications for the complexity of \GED. While \MCIS generalizes \SI and remains \NP-hard even on trees, these hardness results do not carry over to \GED via the known reduction by Bunke~\cite{bunkeRelationGraphEdit1997}. Therefore, we establish a reduction from the maximum common edge subgraph problem (\MCES) that preserves the graph classes of the input graphs. This allows us to obtain the result that computing \GED is \NP-hard even when both input graphs are trees, in both the labeled and unlabeled settings. For unlabeled graphs, or equivalently for a cost function permitting relabelings at zero cost, we show that a known polynomial-time \MCES algorithm yields a polynomial-time algorithm for \GED for a specific cost function when one graph is a path, and the other is a tree. Finally, for labeled graphs, we prove that the \GED with non-zero relabeling costs remains \NP-hard even when both input graphs are paths.

\begin{figure}[t]
\centering
\begin{tikzpicture}[
  every node/.style = {
    draw,
    rounded corners = 4pt,
    align = center,
    font = \small,
    inner sep = 2pt,
    minimum width = 2cm,
    minimum height = .7cm,
  },
  arr/.style = {
    ->,
    >= {Stealth[length=7pt]}
  },
  doublearr/.style = {
    <->,
    >= {Stealth[length=7pt]}
  },
]

\node                                       (GED)  {\GED}; %
\node[above left  = 0.5cm and 2.8cm of GED] (SI)   {\SI}; %
\node[below left  = 0.5cm and 2.8cm of GED] (GSim) {\GSim}; %
\node[above right = 0.5cm and 2.8cm of GED] (MCIS) {\MCIS}; %
\node[below right = 0.5cm and 2.8cm of GED] (MCES) {\MCES}; %

\draw[arr, ultra thick]
  (SI) -- node[above, font=\scriptsize, draw=none, fill=none,
               inner sep=2pt] {Eq.~\eqref{eq:SItoGED}~\cite{zengComparingStarsApproximating2009}}
  (GED);

\draw[arr, dashed]
  ([xshift=-.5cm]MCES.north) -- node[right, font=\scriptsize, draw=none, fill=none, xshift=-.5cm] {Eq.~\eqref{eq:MCES_MCIS} \\[1mm] \cite{Nicholsonetal1987}}
  ([xshift=-.5cm]MCIS.south);
\draw[arr]
  ([xshift=.5cm]MCES.north) -- node[right, font=\scriptsize, draw=none, fill=none, xshift=-.45cm] {Eq.~\eqref{eq:MCES_MCIS} \\[1mm] \cite{Nicholsonetal1987,labeledLineGraph}}
  ([xshift=.5cm]MCIS.south);

\draw[doublearr, ultra thick, dashed]
  (GSim) -- node[above, font=\scriptsize, draw=none, fill=none,
                 inner sep=2pt] {Eq.~\eqref{eq:GSimToMCES}}
  (MCES);

\draw[arr]
  (MCIS) -- node[above, font=\scriptsize, draw=none, fill=none,
                inner sep=2pt] {Eq.~\eqref{eq:bunke}~\cite{bunkeRelationGraphEdit1997}}
  (GED);

\draw[arr, ultra thick]
  (MCES) -- node[above, font=\scriptsize, draw=none, fill=none,
                inner sep=2pt] {Thm.~\ref{thm:GEDMCES}}
  (GED);

\end{tikzpicture}
\caption{Polynomial-time reductions between graph similarity problems. 
Bold arrows preserve graph classes; dashed arrows apply to unlabeled graphs only.}
\label{fig:reductions}
\end{figure}

\section{Background and Problem Relationships}
\label{sec:background}

We formally define the graph edit distance and related isomorphism problems, and summarize known correspondences between them.

\subsection{Preliminaries}
\label{sec:prelims}
We consider simple undirected graphs with categorical vertex and edge labels.
A graph $G$ is a tuple $(V,E)$, where $V$ is the set of vertices (or nodes) and $E$ is the set of edges. An edge $(u,v)$ connects two vertices $u,v$ in $V$, where $(u,v)$ and $(v,u)$ refer to the same edge. We endow a graph with a labeling function $\ell\colon V \cup E \to \Sigma$, where $\Sigma$ is a finite alphabet, and write $\ell(v)$ for the label of a vertex $v$ in $V$ and $\ell(u,v)$ for the label of an edge $(u,v)$ in $E$.

An \emph{edit operation} changes a graph via insertion, deletion, or relabeling of 
vertices and edges, where each operation incurs non-negative cost as assigned by the cost function $C$.
An \emph{edit path} $\gamma = (o_1,\ldots,o_k)$ from a graph $G_1$ to a graph $G_2$ is a sequence of
edit operations transforming $G_1$ into a graph isomorphic to $G_2$, with total cost $C(\gamma) = \sum_i C(o_i)$.  
The \emph{graph edit distance} (\GED) is defined as
\[
  \mathrm{GED}(G_1, G_2; C)
  = \min_{\gamma \in \Gamma(G_1, G_2)} C(\gamma),
\]
where $\Gamma(G_1, G_2)$ is the set of all restricted edit paths from $G_1$ to $G_2$. Restricted edit paths abide by seven structural constraints and guarantee the equivalence to an algorithmically amenable node map formulation of the GED~\cite{bougleuxGraphEditDistance2017,BlumenthalG20}. In a restricted edit path, a vertex can only be deleted after its incident edges have been deleted (C1), an edge can only be inserted if its endpoints exist (C2), each node and edge is edited at most once (C3--6), and deletions and reinsertions of edges between relabeled nodes are forbidden (C7), see~\cite{bougleuxGraphEditDistance2017} for details on these constraints.
We consider cost functions that assign a constant relabeling cost for nodes with non-matching labels and constant insertion and deletion costs. Specifically, we discuss the cost functions defined in Table~\ref{tab:cost-functions} in the context of reductions between several related problems defined in the following.

Two graphs $G_1=(V_1, E_1)$ and $G_2=(V_2,E_2)$ with labeling functions $\ell_1$ and $\ell_2$ are \emph{isomorphic}, written $G_1 \simeq G_2$, if there is a bijection $\phi\colon V_1 \to V_2$ such that $\ell_1(v) = \ell_2(\phi(v))$ for all $v$ in $V_1$, and $(u,v)$ in $E_1$ if and only if $(\phi(u), \phi(v))$ in $E_2$ for all $u$, $v$ in $V_1$, and $\ell_1(u,v)=\ell_2(\phi(u),\phi(v))$ for all $(u,v)$ in $E_1$.
Given two graphs $G_1$ and $G_2$, the \emph{subgraph isomorphism problem} (\SI) is to decide if $G_2$ contains a subgraph isomorphic to $G_1$. A \emph{common (induced) subgraph} of $G_1$ and $G_2$ is an (induced) subgraph of $G_1$ that is isomorphic to an (induced) subgraph of $G_2$.
The \emph{maximum common induced subgraph problem} (\MCIS) is to find the maximum number of vertices in a common induced subgraph of $G_1$ and $G_2$ and we denote the result by $\mathrm{MCIS}(G_1, G_2)$.
The \emph{maximum common edge subgraph problem} (\MCES) asks for the maximum number of edges in a common subgraph of $G_1$ and $G_2$, and we write $\mathrm{MCES}(G_1, G_2)$.  Note that a common subgraph may contain isolated vertices, which, however, do not contribute to the solution size. Hence, there exists an edge-induced subgraph without isolated vertices with the same number of edges. 
Finally, the \emph{graph similarity problem} ($\GSim$), as considered, e.g., in \cite{grohe_et_al:LIPIcs.MFCS.2018.20}, minimizes the Frobenius distance between the adjacency matrices $A_{G_1}$ and $A_{G_2}$ of two graphs $G_1$ and $G_2$ of order $n$, i.e.,
\[
  \mathrm{dist}(G_1, G_2) := \min_{\pi \in S_n}
  \|\,A^{\pi}_{G_1} - A_{G_2}\|_F\,,
\]
where $A^{\pi}_{G_1}$ denotes the matrix obtained from $A_{G_1}$ by permuting both rows and columns according to the permutation $\pi$, and $S_n$ denotes the symmetric group on $n$ objects.

\subsection{Known Reductions and Implications on Complexity}
\label{sec:reductions}
Several reductions of classical combinatorial isomorphism problems to \GED have been proposed, partly for their intrinsic value and partly to derive complexity results. We summarize these results and their direct implications for the complexity of \GED. 
Table~\ref{tab:cost-functions} summarizes the cost functions underlying the reductions, and Figure~\ref{fig:reductions} provides a complete overview of the results, including those derived later in this work.

\begin{table}[t]
\centering \setlength{\tabcolsep}{3pt}
\caption{Overview of constant edit cost functions.}
\label{tab:cost-functions}

\begin{tabular}{llcccccc}
\toprule
\multicolumn{2}{c}{Cost function} & \multicolumn{3}{c}{Vertex} & \multicolumn{3}{c}{Edge} \\
\cmidrule(lr){1-2}\cmidrule(lr){3-5}\cmidrule(lr){6-8}
Name & Symbol & Insertion & Deletion & Relabeling & Insertion & Deletion & Relabeling \\
\midrule
Bunke~\cite{bunkeRelationGraphEdit1997}& $C_B$ & 1 & 1 & $\infty$ & 0 & 0 & $\infty$ \\
Unit & $C_U$  & 1 & 1 & 1 & 1 & 1 & 1 \\
Edge & $C_E$  & 0 & 0 & $\infty$ & 1 & 1 & 2 \\
Zero & $C_0$  & 1 & 1 & 1 & 0 & 0 & 0 \\
\bottomrule
\end{tabular}
\end{table}

\subsubsection{\MCIS reduces to \GED.}
Bunke~\cite{bunkeRelationGraphEdit1997} has shown that MCIS reduces to GED as follows.
Let $\widetilde{G}$ denote the complete graph obtained from $G$ by adding edges with a 
specific label $\bot$ between all pairs of non-adjacent nodes.
Then, under the Bunke cost function $C_B$, it holds
\begin{equation}
  \label{eq:bunke}
  \mathrm{GED}(\widetilde{G}_1, \widetilde{G}_2; C_B) = |V_1| + |V_2| - 2\,\MCIS(G_1, G_2).
\end{equation}
Note that the assumption of restricted edit paths in combination with the complete graph model is crucial to guarantee that all adjacencies and non-adjacencies of the induced subgraph are 
preserved.
Since MCIS is \NP-hard, we may conclude that \GED is \NP-hard. However, the complexity results for MCIS in restricted graph classes cannot be transferred from this result, since the construction uses complete input graphs.

\subsubsection{\MCES reduces via \MCIS to \GED.}
\MCES can be reduced to \MCIS via line graphs using a transformation proposed in~\cite{Nicholsonetal1987}, which 
is widely-used in cheminformatics~\cite{kriege_chemical_2025}.
The \emph{line graph} $L(G)$ of a graph $G=(V, E)$ is the graph with vertex 
set $E(G)$, in which two vertices are adjacent if and only if the two corresponding edges are adjacent in $G$.
Two isomorphic graphs clearly have isomorphic line graphs. 
A classical result by Whitney~\cite{whitney_congruent_1932} states that for two 
connected graphs $G$ and $H$, the reverse holds in almost all cases: 
If $L(G) \simeq L(H)$ then $G \simeq H$, or $G \simeq K_{1,3}$ and 
$H \simeq K_3$, or vice versa. 
Each induced subgraph of $L(G)$ directly corresponds to an edge-induced subgraph of $G$ and vice versa. 
Hence, we have
\begin{equation}
  \label{eq:MCES_MCIS}
  \mathrm{MCES}(G_1, G_2) = \MCIS(L(G_1), L(G_2)),
\end{equation}
unless all the corresponding isomorphisms map a $K_{1,3}$ to $K_3$, or vice versa. 
Note that this exception is relevant only for components of the \MCES consisting of three edges and can be easily checked.
The transformation can be extended to cope with labeled graphs by creating labeled
line graphs~\cite{labeledLineGraph}.
Note that the transformation does not guarantee that the graph class is preserved.
The reduction is included for completeness and, combined with Eq.~\eqref{eq:bunke}, does not yield complexity results for \GED beyond its \NP-hardness.

\subsubsection{\SI reduces to \GED under unit costs.}
The following equivalence has been established by Zeng et al.~\cite{zengComparingStarsApproximating2009}.
The graph $G_1=(V_1, E_1)$ is isomorphic to a subgraph of $G_2=(V_2,E_2)$ if and only if
$|V_1| \le |V_2|$ and $|E_1| \le |E_2|$ and
\begin{equation}\label{eq:SItoGED}
  \GED(G_1, G_2; C_U) = |V_2| - |V_1| + |E_2| - |E_1|.   
\end{equation}
Since \SI is \NP-complete~\cite{garey1979computers}, \GED is shown to be \NP-hard under unit costs~\cite{zengComparingStarsApproximating2009}.
Furthermore, we can transfer the hardness results known for \SI in restricted graph classes.
It directly follows that \GED under unit costs is \NP-hard even if $G_1$ is a forest and $G_2$ is a tree~\cite[Theorem 4.6]{garey1979computers}. Moreover, it remains \NP-hard in outerplanar graphs, even if $G_1$ is connected~\cite{SYSLO198291}, and in partial $k$-trees unless $G_1$ is $k$-connected or has at most $k$ vertices of unbounded degree~\cite{guptaComplexitySubgraphIsomorphism1996}.
On the other hand, \SI is polynomial-time solvable in trees~\cite{MATULA197891}, and in partial $k$-trees if $G_1$ is $k$-connected or connected and of bounded degree~\cite{MATOUSEK1992343}.

\section{Reducing \MCES to \GED Preserving Graph Classes}
In contrast to \SI, the problem \MCES is known to remain \NP-hard even when both input graphs are trees; see Section~\ref{sec:hardness} for details. We provide a reduction from \MCES to \GED, transferring this result to \GED and refining the complexity status obtained from the reductions in Section~\ref{sec:reductions}. In contrast to \MCIS, reducing \MCES to \GED does not require preserving non-adjacencies, eliminating the need to use complete graphs to track non-adjacent vertices. While the reduction appears straightforward, we are not aware of any explicit reference in the literature.

\begin{theorem}\label{thm:GEDMCES}
  Let $G_1=(V_1,E_1)$ and $G_2=(V_2,E_2)$ be two labeled graphs and $C_E$ as defined in Table~\ref{tab:cost-functions}, then
  \begin{equation}\label{eq:edge}
  \GED(G_1, G_2; C_E) = |E_1| + |E_2| - 2\,\MCES(G_1, G_2).
\end{equation}
\end{theorem}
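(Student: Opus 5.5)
The plan is to prove the two inequalities separately, working throughout with the node-map formulation of \GED that restricted edit paths admit~\cite{bougleuxGraphEditDistance2017,BlumenthalG20}. Under $C_E$, vertex insertions and deletions are free, and vertex relabeling costs $\infty$. A finite-cost edit path therefore corresponds to a partial injection $\phi\colon V_1 \rightharpoonup V_2$ with two properties: it only matches vertices with equal labels, and every unmatched vertex of $G_1$ is deleted while every unmatched vertex of $G_2$ is inserted. Given $\phi$, the cheapest edge operations are forced. An edge $(u,v)\in E_1$ whose image $(\phi(u),\phi(v))$ is an edge of $E_2$ with the same label is kept at cost $0$. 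Every other edge of $E_1$ is deleted or relabeled, and every remaining edge of $E_2$ is inserted or is the target of a relabeling.

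\textbf{Upper bound.} Take a maximum common subgraph $H$ witnessed by embeddings into $G_1$ and $G_2$; by the remark in the preliminaries we may assume $H$ is edge-induced without isolated vertices. The resulting label-preserving partial bijection $\phi$ on $V(H)$ gives finite vertex cost. We delete the $|E_1|-\MCES(G_1,G_2)$ edges of $G_1$ not in $H$ and insert the $|E_2|-\MCES(G_1,G_2)$ edges of $G_2$ not in $H$, each at cost $1$. This respects constraints C1--C7: vertex deletions follow edge deletions, insertions of edges follow insertions of their endpoints, and no delete-and-reinsert occurs between matched nodes since only $H$-edges are kept. Hence $\GED \le |E_1|+|E_2|-2\,\MCES$.

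\textbf{Lower bound.} Fix an optimal restricted edit path, which has finite cost and thus induces a label-preserving $\phi$. Let $k$ be the number of edges of $E_1$ that are preserved without any operation; these are mapped to edges of $E_2$ with equal label. These $k$ edges together with their endpoints form a subgraph of $G_1$ that $\phi$ maps isomorphically, labels included, onto a subgraph of $G_2$, so $k\le\MCES(G_1,G_2)$. Every edge of $E_1$ other than these $k$ undergoes exactly one operation by C3--C6: a deletion (cost $1$) or a relabeling (cost $2$). Likewise every edge of $E_2$ not hit by the $k$ preserved ones either is inserted (cost $1$) or is the target of a relabeling. A relabeling accounts for one edge on each side, and its cost $2$ equals one deletion plus one insertion. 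Charging each relabeling $1$ to each side therefore gives total cost at least $(|E_1|-k)+(|E_2|-k)\ge |E_1|+|E_2|-2\,\MCES$.

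The main obstacle is this accounting step: we must make sure that a restricted edit path cannot get an edge more cheaply through some combination of operations. Examples are moving an edge via relabeling between matched vertices, or deleting and reinserting an edge that is already present. I would handle this by invoking the node-map equivalence directly, so that each edge of $E_1$ and $E_2$ is classified exactly once as kept, substituted, deleted or inserted. The choice of $2$ for edge relabeling in $C_E$ is exactly what makes a substitution cost the same as a deletion plus an insertion, so it never beats the bound.
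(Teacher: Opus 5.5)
Your proposal follows the paper's proof. Your lower bound is the paper's part (i): extract a label-preserving node map from the path and compare the cost $e_d+e_i+2e_r$ with the $|E_1|-e_d-e_r=|E_2|-e_i-e_r$ kept edges. Your upper bound is part (ii).

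One step is wrong as stated: the claim that deleting every edge of $G_1$ outside $H$ and inserting every edge of $G_2$ outside $H$ respects C7. Suppose $u,v\in V(H)$, $(u,v)\in E_1\setminus E(H)$, and $(\phi(u),\phi(v))\in E_2$ carries a different label. An example is two triangles with edge labels $a,a,b$ and $a,a,c$. Your path then deletes $(u,v)$ and inserts $(\phi(u),\phi(v))$. That is exactly the deletion and reinsertion of an edge between matched nodes that C7 forbids. Maximality of $H$ only rules out the equal-label case.

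The fix is what the paper does explicitly: relabel such edges instead. This costs $2$, the same as the deletion plus insertion you charged, so the bound $|E_1|+|E_2|-2\,\MCES(G_1,G_2)$ is unaffected and the rest of your argument goes through.
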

\begin{proof}
We prove the following stronger statements: (i) for any restricted edit path $\gamma$ from $G_1$ to $G_2$, there is a common subgraph $S=(V_S,E_S)$ of $G_1$ and $G_2$ satisfying $C_E(\gamma)=|E_1|+|E_2|-2|E_S|$, and conversely, (ii) for any common subgraph $S$ of $G_1$ and $G_2$, there is a restricted edit path $\gamma$ from $G_1$ to $G_2$ satisfying the same identity.

Proof of (i): Given a restricted edit path $\gamma$ from $G_1$ to $G_2$, we apply all vertex and edge deletions to $G_1$ and additionally delete all relabeled edges to obtain a graph $G_1'$. Let $\phi$ be the node map induced by $\gamma$, which exists according to~\cite{bougleuxGraphEditDistance2017,BlumenthalG20}. For all insertions of a vertex $v$ in $\gamma$, delete the vertex $\phi(v)$ in $G_2$; for all edge insertions and relabelings of edges $(u,v)$ in $\gamma$, delete the corresponding edge $(\phi(u),\phi(v))$ in $G_2$. We denote the resulting graph by $G'_2$. The node map $\phi$ restricted to the vertices preserved in $G'_1$ is an isomorphism between $G'_1$ and $G'_2$, and we thus have identified a common subgraph. Let $e_d$, $e_i$, and $e_r$ denote the number of edge deletions, insertions and relabelings. Then the number of edges in the common subgraph is $|E_1|-e_d-e_r = |E_2|-e_i-e_r$, and the edit costs are $C_E(\gamma) = e_d+e_i+2e_r$, since relabelings have cost two and vertex insertions and deletions have zero costs. For the right-hand side of Eq.~\eqref{eq:edge}, we obtain $|E_1|+|E_2|-(|E_1|-e_d-e_r + |E_2|-e_i-e_r) = e_d+e_i+2e_r$, proving the statement.

Proof of (ii): We start from the isomorphism $\phi$ between common subgraphs of $G_1$ and $G_2$ and derive an edit path $\gamma$ as follows: All edges of $G_1$ not preserved in $G_2$ under $\phi$ are deleted, additional edges $(u,v)$ in $G_2$ are inserted as $(\phi^{-1}(u),\phi^{-1}(v))$ in $G_1$, preserved edges with non-matching labels are relabeled, leading to the same cost as deletion followed by insertion. All edges in $G_1$ and $G_2$ having at least one endpoint not contained in $\phi$ are deleted and inserted, respectively. The vertices in $G_1$ and $G_2$ that are not part of the common subgraph are deleted and inserted, respectively, at zero cost. 
In total, each edge not contained in the common subgraph $S$ incurs a cost of one, and the identity holds. 

From the two statements just proven, it follows, in particular, that the cost of an optimal edit path under the cost function $C_E$ and the number of edges not contained in an \MCES are equal, and Eq.~\eqref{eq:edge} holds. \qed
\end{proof}

Theorem~\ref{thm:GEDMCES} directly yields a polynomial-time reduction from \MCES to \GED under a special cost function preserving the graph classes.
For the case of unlabeled graphs, \MCES is equivalent to \GSim.
Given two graphs of the same order $n$, it holds
\begin{equation}
\label{eq:GSimToMCES}
  \mathrm{dist}(G_1, G_2)^2 = 2|E_1| + 2|E_2| - 4\,\MCES(G_1, G_2) = 2\,\GED(G_1, G_2; C_E),
\end{equation}
since the adjacency entries are binary, and each mismatch of an undirected edge contributes two differing matrix entries. The result extends to graphs of different sizes by padding the smaller graph with isolated vertices.
From the result that \GSim is \NP-hard if $G_1$ and $G_2$ are trees~\cite{grohe_et_al:LIPIcs.MFCS.2018.20}, we obtain that \MCES and \GED are \NP-hard on trees. 
Please note that there are various polynomial-time computable variants of the tree edit distance, such as~\cite{zhang_simple_1989}, which, however, are not equivalent to \GED on trees.
\GSim can be computed in polynomial-time when $G_1$ is a path and $G_2$ is a tree~\cite{grohe_et_al:LIPIcs.MFCS.2018.20}. We show that this positive result does not transfer to \MCES and \GED in labeled graphs.

\section{Hardness of \MCES on Labeled Paths}\label{sec:hardness}
The complexity of \MCES has been extensively investigated. Garey and Johnson~\cite[GT49]{garey1979computers} show its \NP-hardness and refer to a private communication with Edmonds and Matula for the result that the problem can be solved in polynomial time on two trees. However, as suggested in \cite{MATULA197891}, the communication probably referred to the case where the common subgraph must be connected, a problem known as the maximum common subtree problem~\cite{droschinskyFasterAlgorithmsMaximum2016}.
Indeed, there are papers~\cite{Hajiaghayi2007,kriege_chemical_2025} stating that \MCES remains \NP-hard in trees, or---more generally---that finding a ($k-1$)-connected \MCES in partial $k$-trees is \NP-hard, referring to an unpublished manuscript~\cite{Brandenburg2000}.
More recently, Grohe et al.~\cite[Theorem 8]{grohe_et_al:LIPIcs.MFCS.2018.20} showed formally that \GSim is \NP-hard when both input graphs are trees. Using Eq.~\eqref{eq:GSimToMCES}, we obtain the same hardness result for \MCES, which has recently been shown directly by reduction from $3$-Partition~\cite{Rautenbach_Werner_2026}.
Grohe et al.~\cite[Theorem 9]{grohe_et_al:LIPIcs.MFCS.2018.20} give a polynomial-time algorithm for \GSim in unlabeled graphs, when one graph is a tree and the other a path, which directly extends to \MCES.
We consider the labeled setting and provide a natural extension of the $3$-Partition-based reductions showing that \MCES for two edge-labeled paths is \NP-hard.
By Theorem~\ref{thm:GEDMCES}, this implies \NP-hardness of the \GED on edge-labeled paths.

\begin{figure}[t]   
    \centering

    \begin{tikzpicture}[
        dot/.style={circle, fill=black, inner sep=0pt, minimum size=4pt},
        lab/.style={font=\scriptsize},
        solid/.style={semithick},
        dashed sep/.style={semithick, dashed},
        dashdot sep/.style={semithick, dash dot},
        brace above/.style={decorate, decoration={brace, amplitude=4pt, raise=5pt}},
        brace below/.style={decorate, decoration={brace, amplitude=4pt, raise=5pt, mirror}},
    ]
    
    \def\sp{0.7}

    \foreach \i in {0,...,3} { \node[dot] (a1-\i) at (\i*\sp, 0) {}; }
    \draw[solid] (a1-0) -- (a1-1);
    \draw[dotted, thick] (a1-1) -- (a1-2);
    \draw[solid] (a1-2) -- (a1-3);
    \draw[brace above] (a1-0.north) -- (a1-3.north) node[midway, above=7pt, lab] {$a_1$};
    
    \foreach \i in {0,...,3} { \node[dot] (a2-\i) at ({(4.2+\i)*\sp}, 0) {}; }
    \draw[dashed sep] (a1-3) -- (a2-0) node[midway, above, lab] {$2$};
    \draw[solid] (a2-0) -- (a2-1);
    \draw[dotted, thick] (a2-1) -- (a2-2);
    \draw[solid] (a2-2) -- (a2-3);
    \draw[brace above] (a2-0.north) -- (a2-3.north) node[midway, above=7pt, lab] {$a_2$};
    
    \node[dot] (s2) at ({8.4*\sp}, 0) {};
    \draw[dashed sep] (a2-3) -- (s2) node[midway, above, lab] {$2$};
    
    \node at (9.8*\sp, 0) {\large$\cdots$};
    \node[dot] (s3) at (11.2*\sp, 0) {};
    
    \foreach \i in {0,...,3} { \node[dot] (ae-\i) at ({(12.4+\i)*\sp}, 0) {}; }
    \draw[dashed sep] (s3) -- (ae-0) node[midway, above, lab] {$2$};
    \draw[solid] (ae-0) -- (ae-1);
    \draw[dotted, thick] (ae-1) -- (ae-2);
    \draw[solid] (ae-2) -- (ae-3);
    \draw[brace above] (ae-0.north) -- (ae-3.north) node[midway, above=7pt, lab] {$a_{3m}$};
    
    \node[font=\small] at ({7.7*\sp}, -0.7) {$P_1$};

    \begin{scope}[xshift={0.3*\sp cm}]
    
    \foreach \i in {0,...,5} { \node[dot] (b1-\i) at (\i*\sp, -1.8) {}; }
    \draw[solid] (b1-0) -- (b1-1) -- (b1-2);
    \draw[dotted, thick] (b1-2) -- (b1-3);
    \draw[solid] (b1-3) -- (b1-4) -- (b1-5);
    \draw[brace below] (b1-0.south) -- (b1-5.south) node[midway, below=7pt, lab] {$A+2$};
    
    \node[dot] (bar1) at (6.2*\sp, -1.8) {};
    \draw[dashdot sep] (b1-5) -- (bar1) node[midway, above, lab] {$3$};
    
    \node at (7.4*\sp, -1.8) {\large$\cdots$};
    
    \node[dot] (bar2) at (8.6*\sp, -1.8) {};
    
    \foreach \i in {0,...,5} { \node[dot] (bm-\i) at ({(9.8+\i)*\sp}, -1.8) {}; }
    \draw[dashdot sep] (bar2) -- (bm-0) node[midway, above, lab] {$3$};
    \draw[solid] (bm-0) -- (bm-1) -- (bm-2);
    \draw[dotted, thick] (bm-2) -- (bm-3);
    \draw[solid] (bm-3) -- (bm-4) -- (bm-5);
    \draw[brace below] (bm-0.south) -- (bm-5.south) node[midway, below=7pt, lab] {$A+2$};
    
    \node[font=\small] at (7.4*\sp, -2.8) {$P_2$};
    
    \end{scope}
    
    \end{tikzpicture}
    \caption{Construction of paths $P_1$ and $P_2$ for \NP-hardness reduction.}
    \label{fig:path_construction}
\end{figure}

\begin{theorem}

\label{thm:mcis_degree}
\MCES is \NP-hard even when restricted to edge-labeled paths.
\end{theorem}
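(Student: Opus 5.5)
We reduce from $3$-Partition, which is strongly \NP-hard. An instance consists of positive integers $a_1,\dots,a_{3m}$ with $\sum_i a_i = mA$ and $A/4 < a_i < A/2$; the question is whether the indices can be split into $m$ triples, each summing to $A$. Strong \NP-hardness means we may encode the numbers in unary, so all paths below have polynomial length.

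\textbf{Construction (Figure~\ref{fig:path_construction}).} All edges carry label $1$ unless stated otherwise.
\begin{itemize}
\item $P_1$ consists of $3m$ segments. Segment $i$ is a subpath of $a_i$ edges labeled $1$. Consecutive segments are joined by a single separator edge labeled $2$.
\item $P_2$ consists of $m$ blocks. Each block is a subpath of $A+2$ edges labeled $1$. Consecutive blocks are joined by a separator edge labeled $3$.
\end{itemize}
Since label $2$ never occurs in $P_2$ and label $3$ never occurs in $P_1$, no separator edge can be part of a common subgraph. Any common subgraph of $P_1$ and $P_2$ is therefore a disjoint union of label-$1$ paths. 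Each such path lies inside a single segment of $P_1$ and inside a single block of $P_2$.

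We claim the instance is a yes-instance iff $\MCES(P_1,P_2) \ge mA$.

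\textbf{Forward direction.} Suppose the indices split into triples summing to $A$. Take the $3m$ segment paths themselves as the common subgraph. Place the three segments of the $j$-th triple disjointly into block $j$ of $P_2$. They need $A$ edges plus two unused edges between consecutive segments, so $A+2$ edges in total, which is exactly the block length. This common subgraph has $mA$ edges.

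\textbf{Backward direction.} Suppose a common subgraph has at least $mA$ edges.
\begin{enumerate}
\item Count capacity. The segments of $P_1$ contain exactly $mA$ label-$1$ edges, so every label-$1$ edge of $P_1$ is used.
\item Segments stay whole. If a segment were split into several components, two consecutive components would be separated by a missing edge inside the segment. That edge would be unused, contradicting the previous step. Hence each segment is mapped entirely into a single block.
\item Blocks hold at most three segments. Packing $k$ vertex-disjoint subpaths into a path needs at least $k-1$ gap edges. A block with $k$ segments therefore needs (sum of their $a_i$) $+\,k-1 \le A+2$ edges. Also each $a_i > A/4$.
\item Conclude. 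Let $S_j$ denote the sum of the $a_i$ assigned to block $j$ and $k_j$ the number of segments placed there. Then $S_j \le A+3-k_j \le A+2$ for every block $j$. Suppose a block had $k_j \ge 4$. Then $S_j > 4\cdot A/4 = A$, while $S_j \le A+3-k_j \le A-1$, a contradiction. So each block holds at most $3$ segments, and $3m$ segments are spread over $m$ blocks, so every block holds exactly $3$. Then $S_j \le A$ for every $j$, and $\sum_j S_j = mA$ forces $S_j = A$ for all $j$. This is a valid $3$-partition.
\end{enumerate}

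The main obstacle is this backward direction: showing that an optimal common subgraph cannot fragment segments or overfill blocks. This is exactly where the extra slack of two edges per block and the bound $a_i > A/4$ are used, and it requires the careful accounting of gap edges in steps 3 and 4. Finally, by Theorem~\ref{thm:GEDMCES}, the hardness transfers to \GED under $C_E$ on edge-labeled paths.
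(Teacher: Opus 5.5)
Your proof is correct and follows the paper's proof: a reduction from 3-Partition using the same paths $P_1$ (segments of $a_i$ label-$1$ edges joined by label-$2$ edges) and $P_2$ (blocks of $A+2$ label-$1$ edges joined by label-$3$ edges), with threshold $mA$. Your gap-edge inequality $S_j + k_j - 1 \le A+2$ makes rigorous the step the paper only sketches, namely that every block receives exactly three segments summing to $A$, and it works equally well with the non-strict bounds $A/4 \le a_i \le A/2$ used in the paper.
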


\begin{proof}
 The proof is by a reduction from $3$-Partition, which is \NP-hard in the strong sense and defined as follows. 
 Given an integer $A$ and a multiset $S = \{a_1, \dots, a_{3m}\}$ with $\sum_i a_i = mA$ and $\nicefrac{A}{4} \leq a_i \leq \nicefrac{A}{2}$, the question is if $S$ can be partitioned into $m$ triples each summing to exactly $A$. Given such an instance, we construct two labeled paths $P_1$ and $P_2$ as follows.

For each of the $3m$ elements $a_i$, construct a path of $a_i$ edges with label $1$.
Connect these $3m$ paths according to their index by edges with label $2$, yielding a single path $P_1$ of size $mA + 3m - 1$.
Analogously, construct $m$ paths of $A + 2$ edges with label $1$ and connect them with edges of label $3$.
This results in a single path $P_2$ of length $mA + 3m - 1$. Set $K = mA$.

A $3$-Partition instance has answer YES if and only if $\MCES(P_1, P_2) \ge K$.
Since label $2$ edges appear only in $P_1$ and label $3$ edges appear only in $P_2$, no such edge can be part of any labeled common subgraph, so only label $1$ edges contribute. In $P_1$, these form $3m$ disjoint paths of sizes $a_1, \dots, a_{3m}$, and in $P_2$, they form $m$ disjoint paths of $A + 2$ edges each. We map the three paths of each triple into the corresponding path of $P_2$, yielding a common subgraph of $mA$ edges.

Conversely, suppose a common subgraph $H$ of size at least $mA$ exists. Analogously, $H$ can only be composed of label $1$ edges. Since $P_1$ contains exactly $mA$ such edges, all must be included in $H$. In particular, each $a_i$ path must be mapped entirely into a single path of $P_2$, since splitting it would lose an edge and violate $|H| \ge mA$. With $3m$ paths distributed over $m$ paths, each of capacity $\nicefrac{A}{4} \leq a_i \leq \nicefrac{A}{2}$ and $A + 2$, each path in $P_2$ receives exactly three $a_i$ paths of $P_1$, since fewer would leave edges of $P_1$
unmapped. Since the total is $mA$, each sums to exactly $A$, yielding a valid $3$-Partition. \qed
\end{proof}

\section{Conclusion}

We presented a unified view of several graph similarity problems and their reductions to \GED.
While the complexity of \GED is typically derived from reductions from \SI, the literature does not distinguish between restricted graph classes and edit cost functions in a fine-grained manner. We have shown that the GED remains \NP-hard even on graph classes admitting polynomial-time \SI algorithms, thereby refining the complexity status. Furthermore, via a straightforward reduction from 3-Partition, we proved that \GED remains \NP-hard even when both input graphs are paths.

As future work, we believe that the joint study of graph classes and cost functions can yield further insights into the complexity landscape of \GED. Moreover, while approximation and parameterized algorithms for \SI are well studied (see, e.g.,~\cite{MarxP14}), only few works address maximum common subgraph problems~\cite{Kann1992,Abu-KhzamBS17}, and even less is known for \GED. Extending these algorithmic paradigms to \GED and investigating the associated complexity is a promising direction for future research.

\subsubsection*{Acknowledgements.}
This work has been supported by the Vienna Science and Technology Fund (WWTF) and the City of Vienna [Grant ID: 10.47379/ ICT22059].

\bibliographystyle{splncs04}
\bibliography{references}

\end{document}